\titlespacing{\section}{0pt}{2ex plus .3ex minus .2ex}{1.5ex plus .2ex}  
\newcommand{\xspace}[0]{{\cal X}}
\newcommand{\Mspace}[0]{{\cal M}}
\newcommand{\moreone}[0]{[1,\infty)}
\newcommand{\indicator}[1]{1\left(#1 \right)}
\newtheorem{theorem}{Theorem}
\newtheorem{lemma}{Lemma}
\newtheorem{corollary}{Corollary}
\newenvironment{proof}{\noindent{\em Proof:}}{\hfill $\square$}
\newenvironment{proofnumbered}{\noindent{\em Proof}}{\hfill $\square$}
\title{A Theoretical Framework for Online Information Search}
\author{Rohit Negi  \\
	Carnegie Mellon University  \\
	}
\date{}
\begin{document}

\maketitle

\begin{abstract}
A significant part of human activity today consists of searching for a piece of information online, utilizing knowledge repositories.
This endeavor may be time-consuming if the individual searching for the information is unfamiliar with the subject matter of that
information. However,  experts can aid individuals find relevant information by searching online. This paper describes a theoretical framework
to model the dynamic process by which requests for information come to a system of experts, who then answer the requests by searching for those pieces of information.
\end{abstract}

\section{Introduction}
\label{sec:intro}

The Internet today has been transformed from a network providing connectivity, to a massive repository of human (and machine) knowledge, with 
information relevant to nearly every aspect of human life stored in some corner. Search engines allow keyword-based search of this knowledge, and while
natural language queries are increasingly useful, searching for complex information requires human thinking (augmented with the capabilities of
search engines) to obtain useful search results. While there are canonical `big problems' in different fields that require specialized experts,
a large part of human life deals with a vast number of  small problems, each affecting a different individual in its own unique manner.
These problems require the individual to search the Internet for ideas relevant to solving that problem, an activity that may receive mixed results,
depending on the expertise of that individual. 
But given pervasive online connectivity, there are potentially a large number of `experts' available online that an individual can consult, who can 
contribute their knowledge to problems related to their expertise \cite{wsj}. 

Given the growing importance of such online requests for information, this paper envisions a large number of requests for information being made, but also a large number of potential experts
 available to answer those requests. Since the requests must be responded to in a timely manner, we propose a dynamic framework,
where requests arrive stochastically, are handled by expert(s) who search for relevant information, and depart when the expert provides
a response. Preliminary results on scheduling requests, and on the resulting capacity of the system are presented.

\section{Theoretical Framework}
\label{sec:framework}

The problem setting in the paper assumes that requests for information come into a social network stochastically. Each request
is handled by an expert (or experts), which searches for information to answer that request, and succeeds in providing
information answering that request after a random amount of time, based on the complexity of the search. 
This requires describing a quantitative model for information search
and also describing a model for scheduling these requests, so that experts can answer them. 

\begin{enumerate}[leftmargin=*]
\item{\bf Model of Information Search}

Time is assumed to be discretized finely, so that it is measured as $t=1,2,3,\ldots$ time slots.
Let $\Mspace$ be a large set of information facts.
A {\em topic} $x \subset \Mspace$ is a large subset of facts - examples being `Windows 10 debugging' or `Seventeenth century poetry'. The set of topics $\xspace$ is assumed to be large but finite to avoid technical clutter.
An {\em expert} is a {\em research time} function $T:\xspace \rightarrow
\moreone$, where 
$T(x)\geq 1$ is the mean time that expert takes to answer a request concerning topic $x$; this average time is assumed to be known to the expert. This time is required because the expert will
typically need to search for information relevant to the specific request before being able to answer it. We assume that the time to answer a specific request is a geometrically distributed random variable
(with mean value $T(x)$). A typical request may be `Why does my Windows 10 laptop become hot and shut down?',
which concerns the topic `Windows 10 debugging'.  For conciseness, we will simply
call a request concerning topic $x$ as {\em request $x$}. 


\item{\bf Model of Dynamic Scheduling}

It is assumed that there is a social network of $n$ experts, represented
 as a graph $G=(V,E)$, where the vertices $V$ represent experts and the edges $E$ represented coordination opportunity between pairs of experts. By {\em coordination}, we mean that
a scheduler (described below) can assign a request in expert $i$'s queue to expert $j$, as long as $(i,j) \in E$ in the graph.
For example, if experts exclusively use a Knowledge market (or an Internet Q\&A Forum) like {\tt Quora}  \cite{quora}, they can all coordinate with each other, and so $G$ is a complete graph.
On the other hand, if a social network like {\tt Twitter} or  {\tt Facebook} is used, the graph may have a complex structure,
precluding arbitrary coordination. This paper only considers a complete graph linking the experts.

We adopt a dynamic stochastic model of information searching. In each time slot $t$, at each expert $i \in V$, each request $x \in \xspace$ may newly arrive with probability $\lambda p_i(x)$, and so,  we need a multi-class queuing model. Denote as $a_{x,i}(t) =0,1$ the non-arrival or arrival of request $x$ at expert $i$, respectively. Its arrival is
independent of arrival of requests in other topics, arrivals at other experts, and arrivals in other time slots. Here, $0 < \lambda <1$ and $p_i(x)>0$ is a probability mass function (p.m.f.) over topics $x$ (so, $\sum_{x \in \xspace} p_i(x)=1$).
$\lambda$ can be interpreted as the {\em request load} on the network, while $p_i(x)$ causes requests for certain  topics to appear more frequently.  Due to independence, we allow multiple different topics $x$ to arrive at any expert, and also multiple experts to see requests from
the same topic $x$. Each expert $i$ puts request  $x$ into its own {\em virtual queue} and increases the length
$Q_{x,i}(t)$ of that topic's queue  by one request  (all requests will actually be written in random access memory, so the virtual queue is a book-keeping artifact). 
In practice, the requests may be given to the expert by users she knows in her social circle, or may be selected by the expert from a knowledge
market like {\tt Quora}.

A scheduler
then assigns different requests to different experts, subject to the social network graph, allowing the experts to {\em coordinate} in handling the requests. 
Since this paper assumes a complete graph model, the scheduler can assign any request to any expert. 

Expert $i$ works on
its assigned request $x$ by searching for information (equivalently, called `researching $x$'), and answers it successfully in that time slot with probability $q_i(x) \doteq \frac{1}{T_i(x)}\leq 1$.
Experts with larger  $q_i(x)$  presumably have deeper knowledge that allows them to quickly research problems, and so, a crude measure of expertise of an expert is $R_i=\sum_x q_i(x)$.
$d_{x,i}(t) =0,1$ indicates failure or success of $i$ finding the answer for $x$ during time slot $t$, respectively.
If the request is not answered
successfully, it goes back in its queue. Future scheduling of that request does not utilize the past history of handling that request.
Thus, the number of (potentially non-consecutive) time slots needed to answer a request is a geometric random variable with average time $T_i(x)$. 
Clearly, the queue lengths update as $Q_{x,i}(t+1)=Q_{x,i}(t)+a_{x,i}(t)-d_{x,i}(t)$. 

The maximum request load $\lambda$ that can be researched by this system, while keeping the
  request queues stable is called {\em capacity}. Queue stability can be defined either as {\em stability-in-the-mean} \cite{kumar}, 
i.e.,
\begin{eqnarray}
\limsup_{T \rightarrow \infty} \frac{1}{T} \sum_{t=1}^{T} \sum_x E[Q_{x,i}(t)] < \infty, \qquad \forall i, \label{eqn:stability}
\end{eqnarray}
or as positive recurrence of the queue Markov chain \cite{bremaud}.
Given the large number of topics (large $\xspace$), we may be willing to
reject requests that do not match the expertise available to research them, i.e., $e_{x,i}(t)\in\{0,1\}$, if a new arriving request
$x$ at expert $i$ is kept or rejected, respectively, at time $t$. So, we will also wish
to characterize capacity under $\varepsilon-$loss constraint. i.e., the maximum load that a system can handle while keeping queues stable, with losses bounded as below.
\begin{eqnarray}
\limsup_{T \rightarrow \infty} \frac{1}{T} \sum_{t=1}^{T} \sum_x E[e_{x,i}(t)] \leq \varepsilon, \qquad  \forall i.  \label{eqn:loss}
\end{eqnarray}

\end{enumerate}

\section{Results}
\label{sec:preliminary}

Based on the theoretical framework of information search presented in Section \ref{sec:framework},
we present preliminary results on the performance of the system.

\subsection{Single Expert}
\label{sec:single}

Consider a simple setting with only a single expert `1', as shown in Figure \ref{fig:schedulingproblems}(a). 
At discrete time $t$, requests arrive and are placed in their respective queues. A scheduler assigns a request $x$ from one of the queues
to the expert,
who searches for information to answer it and succeeds in answering it with probability $q(x)$, which depends on the expert's average search time
$T(x)$ for that request. 

\begin{lemma}
\label{thm:single}
The capacity is  $\lambda^* = \left(\sum_x \frac{p(x)}{q(x)}\right)^{-1}$. Further, any $\lambda < \lambda^*$
can be achieved using any work conserving  scheduler (such as one that assigns an arbitrary request in the queue to the expert.) 
\end{lemma}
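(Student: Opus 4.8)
The plan is to collapse the multiclass queue onto a single scalar \emph{workload} process whose stability is governed entirely by the aggregate load $\rho := \lambda\sum_{x}\frac{p(x)}{q(x)}$. The first step exploits memorylessness: because the geometric service uses no past history, I can couple the true dynamics with an equivalent system in which each arriving request of topic $x$ is tagged at arrival with an independent service requirement $S_x\sim\mathrm{Geom}(q(x))$, the number of slots it will eventually need. The total number of slots consumed by a request has the same geometric law (mean $1/q(x)$), so the coupling is distributionally faithful. I then define the workload $W(t)=\sum_{r}\ell_r(t)$, the sum over requests $r$ currently in the system of their remaining slots $\ell_r(t)$, and let $A(t)$ be the work brought by the slot-$t$ arrivals, so that $E[A(t)]=\sum_x \lambda p(x)\cdot\frac{1}{q(x)}=\rho$.

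The key structural fact is that the workload recursion is identical for every work-conserving scheduler. The single server erases exactly one slot of work whenever it is busy, and work-conservation forces it to be busy precisely when $W(t)>0$; hence $W(t+1)=(W(t)-1)^{+}+A(t)$ no matter which queue is served. This is a standard discrete-time single-server queue. For achievability ($\lambda<\lambda^{*}$, i.e.\ $\rho<1$) I would run a Foster--Lyapunov drift argument with $V(w)=w^{2}$: for large $w$ the one-step drift equals $2w(\rho-1)$ plus a term bounded uniformly in $w$, which is strictly negative once $\rho<1$, yielding positive recurrence and a finite steady-state mean $E[W]<\infty$. Since every request in the system needs at least one further slot, $\sum_x Q_x(t)\le W(t)$ pointwise, so finiteness of $E[W]$ transfers directly to the request queues and gives (\ref{eqn:stability}). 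Because the recursion above never referenced the service order, this conclusion holds for an arbitrary work-conserving policy, as claimed.

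For the converse ($\rho>1$) work-conservation is not even required: under \emph{any} causal policy the server removes at most one slot of work per slot, so $W(t+1)\ge W(t)+A(t)-1$ and therefore $W(t)\ge W(0)+\sum_{s<t}\bigl(A(s)-1\bigr)$. The strong law of large numbers gives $W(t)/t\to\rho-1>0$, so $W(t)\to\infty$. It remains to turn blow-up of the workload into blow-up of the count $\sum_x Q_x$ appearing in (\ref{eqn:stability}). Here I use that $\xspace$ is finite: conditioned on the queue lengths, memorylessness gives $E[W\mid\{Q_x\}]=\sum_x \frac{Q_x}{q(x)}\le \frac{1}{q_{\min}}\sum_x Q_x$ with $q_{\min}=\min_{x\in\xspace}q(x)>0$, hence $\sum_x Q_x\ge q_{\min}\,E[W\mid\{Q_x\}]$, and taking expectations forces $\sum_x E[Q_x(t)]\to\infty$. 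Thus no scheduler is stable above $\lambda^{*}$, pinning the capacity at exactly $\lambda^{*}=\bigl(\sum_x p(x)/q(x)\bigr)^{-1}$.

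I expect the last step of the converse to be the main obstacle. Workload growth does not by itself imply growth of the request \emph{count}: a scheduler could favour fast (high-$q$) topics, keeping the count bounded while slow topics silently accumulate work. The finiteness of the topic set---equivalently $q_{\min}>0$---is exactly what prevents this pathology, and I would state the count-versus-workload comparison carefully so that (\ref{eqn:stability}) is genuinely violated. A minor secondary point is fixing the arrival/service ordering in the Lindley recursion (an off-by-one convention), which affects none of the thresholds.
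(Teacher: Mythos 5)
Your proposal is correct, but it takes a genuinely different route from the paper's. The paper never constructs the realized workload: it runs Foster--Lyapunov directly on the queue-length vector ${\mathbf Q}(t)$ with the \emph{linear} Lyapunov function $L(t)=\sum_x Q_x(t)/q(x)$, computes the one-step drift $E[\Delta L(t)\mid {\mathbf Q}(t)]=\lambda\sum_x p(x)/q(x)-\sum_x\sigma_x(t)$, and uses work conservation ($\sum_x\sigma_x(t)=1$ off the empty state) to get strictly negative drift $-\delta$ with $\delta=1-\lambda\sum_x p(x)/q(x)>0$; the converse telescopes the same drift identity, which becomes positive when $\lambda>\lambda^*$. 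Observe that the paper's $L(t)$ is exactly your $E[W(t)\mid{\mathbf Q}(t)]$: by taking conditional expectations of the residual geometric service times, the paper differences out the randomness you carry in the tags, so the two arguments are cousins, but the execution differs materially. Your reduction buys a transparent structural fact --- the Lindley recursion $W(t+1)=(W(t)-1)^{+}+A(t)$ is literally the same for every work-conserving policy --- and a pathwise SLLN converse; its costs are the tagging coupling (which must be argued distribution-preserving for tag-blind schedulers), the second-moment condition $E[A(t)^2]<\infty$ needed by the quadratic Lyapunov function (harmless here, since $\xspace$ is finite and geometrics have finite variance), and the fact that you deliver stability-in-the-mean (\ref{eqn:stability}) rather than positive recurrence of ${\mathbf Q}(t)$, which is acceptable because the paper admits either notion of stability. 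The paper's linear-Lyapunov route needs no coupling and no second moments and yields positive recurrence of the actual queue chain directly. Finally, the obstacle you flagged in the converse --- converting workload blow-up into count blow-up --- is resolved in the paper by the very same inequality you propose: $\sum_x E[Q_x(t)]\geq q_{\min}E[L(t)]$, which is your $q_{\min}E[W(t)]$ bound after conditioning.
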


Evidently, capacity is high if the expertise of the expert  matches closely with the population of requests coming in,
so that none of the ratios $\frac{p(x)}{q(x)}$ is too large. 
In light of this elementary result, we can call $\lambda_j(p) \doteq \left(\sum_x \frac{p(x)}{q_j(x)}\right)^{-1}$ as the capacity
of the expert $j$ with respect to p.m.f. $p(x)$.  

We can also characterize the capacity under loss constraint (\ref{eqn:loss}). 
\begin{lemma}
\label{thm:singleloss}
If we are willing to accept average loss rate $\varepsilon$, the
capacity is no less than the $\lambda^*$ specified by the Linear program below.
\begin{eqnarray}
\lambda^* & = & \max_{\mu(x)} \left( \sum_x \mu(x) \frac{p(x)}{q(x)}\right)^{-1} \qquad \mbox{where}  \label{eqn:caploss} \\
&& \sum_x \mu(x)p(x) \frac{q(x)+\varepsilon}{q(x)} = 1, \label{eqn:caplossequality} \\
&&  0 \leq \mu(x) \leq 1, \forall x. 
\end{eqnarray}
Any $\lambda < \lambda^*$ can be achieved by an offline scheduler; one that first solves this optimization problem assuming known $p(x),q(x)$.
\end{lemma}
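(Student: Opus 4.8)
The plan is to prove the achievability (lower bound) asserted by the lemma: for every $\lambda<\lambda^*$ I will exhibit an offline scheme that keeps the single-expert queues stable while respecting the $\varepsilon$-loss budget (\ref{eqn:loss}). I read the decision variable $\mu(x)$ as the long-run fraction of arriving topic-$x$ requests that are \emph{admitted}. The scheduler first solves the program to obtain an optimizer $\mu^*(x)$ and then performs randomized admission control: an arriving request of topic $x$ is kept ($e_x(t)=0$) independently with probability $\mu^*(x)$ and rejected ($e_x(t)=1$) with probability $1-\mu^*(x)$; admitted requests are served by any work-conserving scheduler, exactly as in Lemma \ref{thm:single}. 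Since the objective $1/W$ with $W \doteq \sum_x \mu(x)p(x)/q(x)$ is decreasing in $W$, maximizing it is equivalent to minimizing the linear functional $W$ over the linear constraints, so the program is indeed a linear program and (over the compact box $0\le\mu(x)\le1$) attains an optimizer $\mu^*$.

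First I would establish stability by reducing to Lemma \ref{thm:single}. Independent thinning of an independent Bernoulli stream stays independent and Bernoulli, so the admitted arrivals again obey the base model, now with per-slot topic-$x$ admission probability $\lambda\mu^*(x)p(x)$. Writing $A \doteq \sum_x \mu^*(x)p(x)$, this stream is exactly $\lambda' p'(x)$ with effective load $\lambda'=\lambda A$ and admitted p.m.f. $p'(x)=\mu^*(x)p(x)/A$, which is a valid p.m.f. since $\sum_x p'(x)=1$. Applying Lemma \ref{thm:single} to this induced single-expert system, a work-conserving scheduler is stable whenever $\lambda' < \left(\sum_x p'(x)/q(x)\right)^{-1} = A/W$. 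Substituting $\lambda'=\lambda A$, this is precisely $\lambda < 1/W = \lambda^*$, so every $\lambda<\lambda^*$ yields stable queues.

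Next I would verify the loss constraint. A rejection of topic $x$ occurs in slot $t$ exactly when such a request arrives and is dropped, so, using $\sum_x p(x)=1$, we get $\sum_x \Exp{e_x(t)} = \lambda\sum_x p(x)(1-\mu^*(x)) = \lambda(1-A)$; because admissions are i.i.d. across slots, the time-average in (\ref{eqn:loss}) equals this same value for every $T$. The equality constraint (\ref{eqn:caplossequality}) simplifies (via $q(x)/q(x)=1$) to $A+\varepsilon W = 1$, i.e. $1-A=\varepsilon W$. Hence at the boundary $\lambda=\lambda^*=1/W$ the loss rate is $\lambda^*(1-A)=\varepsilon W/W=\varepsilon$, and for any $\lambda<\lambda^*$ it is $\lambda(1-A)<\lambda^*(1-A)=\varepsilon$. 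The budget is therefore met (with strict slack), and combining with the stability step shows the $\varepsilon$-loss capacity is at least $\lambda^*$.

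I expect the main obstacle to be making the stability reduction to Lemma \ref{thm:single} fully rigorous rather than the arithmetic of the loss bound. Specifically, one must confirm that independent thinning preserves every hypothesis of the base model (independence across topics and slots, and the allowance of simultaneous topic arrivals), that the normalization $p'$ is well defined, and that the stability notion inherited from Lemma \ref{thm:single}---either (\ref{eqn:stability}) or positive recurrence---transfers verbatim to the induced chain. The well-definedness of $p'$ reduces to $A>0$, which holds at any feasible point: since every $p(x)>0$, having $A=0$ would force $\mu^*\equiv0$, hence $W=0$ and $A+\varepsilon W=0\neq1$, contradicting (\ref{eqn:caplossequality}). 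Finally, excluding the critical case $\lambda=\lambda^*$ is deliberate, as it furnishes the strict slack $\lambda(1-A)<\varepsilon$ and the strict inequality $\lambda'<A/W$, sidestepping any delicacy at criticality; the remaining items (feasibility and attainment of the program) are the routine checks noted above.
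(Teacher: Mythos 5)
Your proof is correct and follows essentially the same route as the paper's: an offline scheduler that independently drops topic-$x$ arrivals with probability $1-\mu^*(x)$, stability of the thinned system by reduction to Lemma \ref{thm:single}, and a direct per-slot expectation computation for the loss. The only cosmetic difference is that you extract the loss bound directly from the equality constraint (\ref{eqn:caplossequality}) via $A+\varepsilon W=1$, whereas the paper first rewrites the program with two inequality constraints (stability and loss) and argues both are tight at the optimum; your version is, if anything, slightly more self-contained.
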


The offline scheduler first calculates the probabilities $\mu(x)$  by solving the optimization problem (\ref{eqn:caploss}) before 
considering requests. After that, when request $a_x(t)$ comes in, the scheduler drops it (so $e_{x}(t)=1$) independently with probability $1-\mu(x)$. Otherwise, it gets inserted into its topic queue. 


For $\varepsilon=0$, the solution to (\ref{eqn:caploss}) is the same as Lemma \ref{thm:single}, because the equality (\ref{eqn:caplossequality}) reduces to $\sum_x \mu(x)p(x) =1$, and so,
 can only be satisfied by $\mu(x) \equiv 1$ (since $p(x)>0$ is a p.m.f.)
Lemma \ref{thm:singleloss} is especially useful when there is a gross mismatch between the requests and the expert. For example, if the expert has $q(x)=0$ iff $x \in \xspace_0$,
the lossless capacity is $\lambda^*=0$. But if we allow loss, we can set $\mu(x)=0, \forall x \in \xspace_0$ and $\mu(x)=1$ otherwise, to achieve a load $\lambda = \left( \sum_{x \notin \xspace_0} \frac{p(x)}{q(x)}\right)^{-1} > 0$,
while accepting a loss of $\varepsilon = \lambda \sum_{x \in \xspace_0} p(x)$.


Suppose that the expert has an erroneous estimate $\hat{T}(x)$ of her average searching time $T(x)$. For example, the expert may have an intuitive approximation of these times based on her past
experience answering questions about these topics.
Since the scheduler uses $\hat{T}(x)$ to schedule while the true search time is $T(x)$, the capacity $\lambda^*$ calculated in Lemma \ref{thm:single} may be an over-estimation, resulting in queue instability. However, an achievable load can
be guaranteed if we assume that the estimation error has a known bound, i.e., if we assume $\hat{T}(x) \geq \gamma T(x), \forall x$, for some constant $\gamma\leq 1$.

\begin{corollary}
\label{thm:singleerror}
Let $\lambda^*$ be the capacity in Lemma \ref{thm:single} calculated using the erroneous search times $\hat{T}(x)$ that have bounded errors. Then, any work conserving scheduler using $\hat{T}(x)$ can achieve any load less than  $\lambda < \gamma \lambda^*$ with stable queues. 
\end{corollary}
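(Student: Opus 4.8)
The plan is to reduce the corollary to Lemma~\ref{thm:single} by observing that the only effect of the estimation error is to shift the scheduler's \emph{declared} operating point, not the mechanism that produces stability. By Lemma~\ref{thm:single}, whether a work conserving scheduler stabilizes the queue depends solely on the true search times $T(x)$ and on the genuinely offered load $\lambda$ lying below the true capacity; the rule the scheduler uses to choose among queued requests (here, a rule informed by $\hat{T}(x)$) is irrelevant to stability, since any such rule remains work conserving.

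First I would fix notation. Writing $\hat{q}(x)=1/\hat{T}(x)$, the quantity the corollary calls $\lambda^*$ is $\lambda^*=\left(\sum_x p(x)/\hat{q}(x)\right)^{-1}=\left(\sum_x p(x)\hat{T}(x)\right)^{-1}$, the capacity Lemma~\ref{thm:single} reports from the estimates. I would introduce the genuine capacity $\lambda_{\mathrm{true}}=\left(\sum_x p(x)/q(x)\right)^{-1}=\left(\sum_x p(x)T(x)\right)^{-1}$ computed from the true times. Lemma~\ref{thm:single} then states that \emph{every} work conserving scheduler keeps the queue stable for each load strictly below $\lambda_{\mathrm{true}}$, and in particular the scheduler driven by $\hat{T}$ is covered.

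The key step is the inequality $\gamma\lambda^*\le\lambda_{\mathrm{true}}$, which follows directly from the error bound. Since $\hat{T}(x)\ge\gamma T(x)$ for every $x$ and $p(x)>0$, summing against $p$ gives $\sum_x p(x)\hat{T}(x)\ge\gamma\sum_x p(x)T(x)$. Both sums are finite and strictly positive because $\xspace$ is finite and $T(x)\ge 1$, so I may invert while reversing the inequality, obtaining $\lambda^*\le\gamma^{-1}\lambda_{\mathrm{true}}$, i.e. $\gamma\lambda^*\le\lambda_{\mathrm{true}}$.

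Combining the pieces, any $\lambda<\gamma\lambda^*$ automatically satisfies $\lambda<\lambda_{\mathrm{true}}$, so Lemma~\ref{thm:single} guarantees that the work conserving scheduler, even one that bases its choices on the erroneous $\hat{T}(x)$, keeps the queues stable. I expect no genuine obstacle here: the entire content lies in recognizing that Lemma~\ref{thm:single} is universal over work conserving schedulers, so the estimation error enters only through the safety factor $\gamma$ rather than through the stability argument itself, and the one calculation needed --- the monotone reciprocal bound --- is routine once positivity of the sums is noted.
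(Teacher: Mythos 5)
Your proposal is correct and follows essentially the same route as the paper's own proof: both establish $\gamma\lambda^* \le \left(\sum_x p(x)/q(x)\right)^{-1}$ from the error bound $\hat{T}(x) \ge \gamma T(x)$ (equivalently $\hat{q}(x) \le q(x)/\gamma$) and then invoke Lemma~\ref{thm:single}, whose achievability holds for every work conserving scheduler, to conclude stability for any $\lambda < \gamma\lambda^*$. Your added remark that the scheduler's use of $\hat{T}$ cannot matter because Lemma~\ref{thm:single} is universal over work conserving schedulers is exactly the observation the paper relies on implicitly.
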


\begin{figure}[t]
\centering
\begin{subfigure}{.45\textwidth}
  \centering
\includegraphics[height=1.5in]{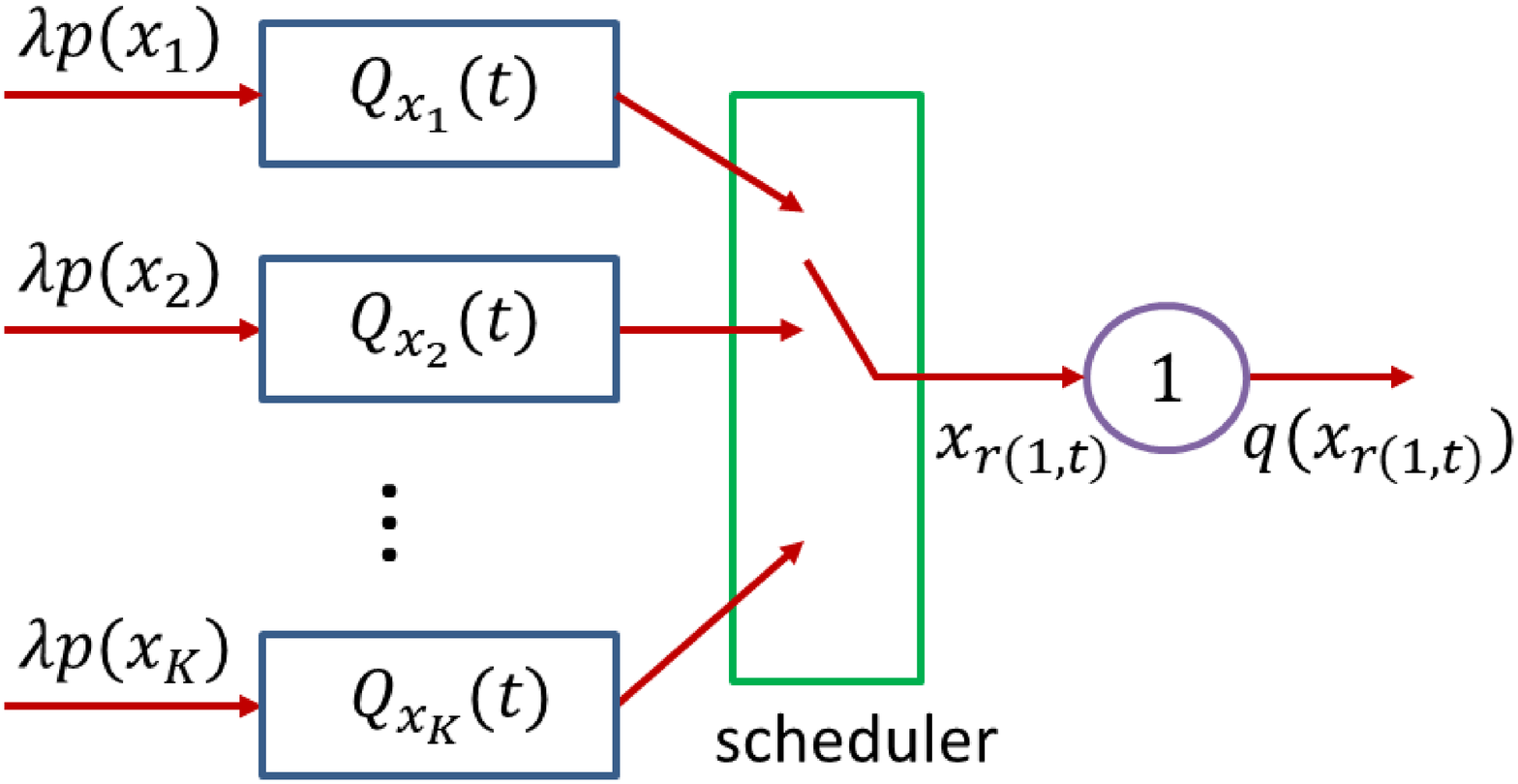}
\subcaption{Single expert}
\label{fig:singlescheduling}
\end{subfigure}%
\begin{subfigure}{.45\textwidth}
  \centering
\includegraphics[height=1.5in]{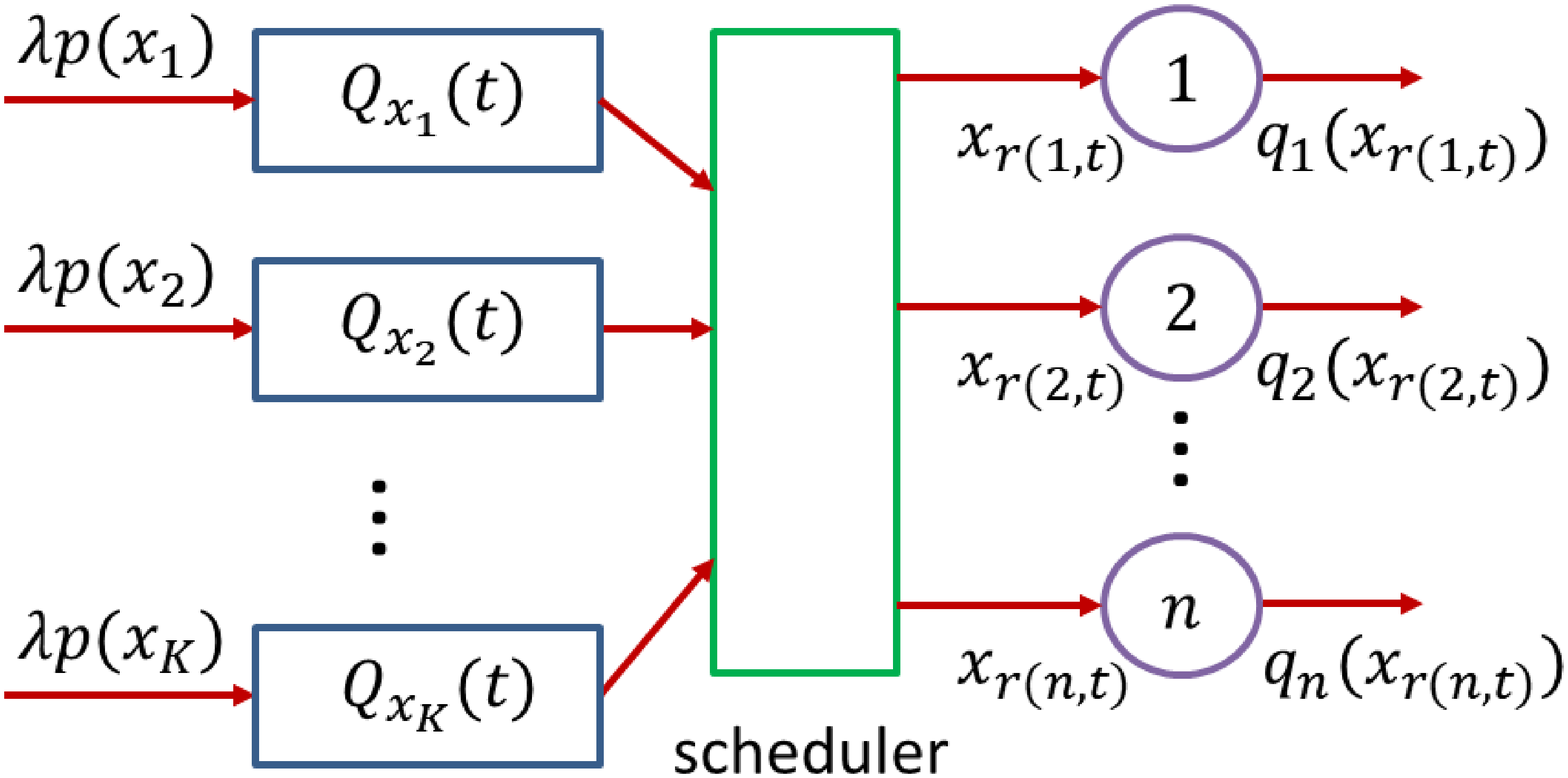}
\subcaption{Multiple coordinating experts in complete graph}
\label{fig:multiplescheduling}
\end{subfigure}%
\caption{Expert scheduling scenarios.}
\label{fig:schedulingproblems}
\end{figure}

\subsection{Multiple Coordinating Experts}
\label{sec:multiple}

Now consider $n$ experts on a social network with a complete graph (so that they can all see requests in each others' queues). Since the theoretical framework allows the scheduler to schedule requests from a neighbor's queue,
in the complete graph case, we can equivalently assume that the queues of all the experts are merged together for each topic $x$; i.e., $Q_x(t)= \sum_i Q_{x,i}(t)$. Define $p(x)=\sum_i p_i(x)$
as the merged p.m.f.
See Figure \ref{fig:multiplescheduling}. This models
experts that each monitor a single knowledge market like {\tt Quora}.
In this case, we have the following result.

\begin{lemma}
\label{thm:multiple}
The capacity with multiple coordinating experts is at least
\begin{eqnarray}
\lambda^* & = & \left(\max_{(\alpha_i)} \sum_x \min_i \left(\alpha_i \frac{p(x)}{q_i(x)}\right) \right)^{-1} \qquad \mbox{where} \label{eqn:multiplecoordinating} \\
&&  \sum_i \alpha_i=1, \quad \alpha_i \geq 0 \nonumber
\end{eqnarray}
Further, any $\lambda < \lambda^*$
can be achieved using an offline scheduler. 
\end{lemma}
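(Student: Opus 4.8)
The plan is to characterize the capacity through a rate-allocation (flow) feasibility problem, convert that problem to the stated closed form by linear-programming duality, and then exhibit the offline scheduler that realizes it. First I would set up the achievable rate region. Since each expert serves at most one request per slot and succeeds on topic $x$ with probability $q_i(x)$, let $\beta_i(x)\geq 0$ be the long-run fraction of slots expert $i$ devotes to topic $x$, so that $\sum_x \beta_i(x)\leq 1$ for every $i$. The mean departure rate offered to the merged queue $Q_x$ is then $\sum_i \beta_i(x)q_i(x)$, while its arrival rate is $\lambda p(x)$. As in Lemma \ref{thm:single}, the queues can be kept stable whenever each topic has strict slack, i.e. $\lambda p(x) < \sum_i \beta_i(x)q_i(x)$. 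Writing $\beta_i(x)=\lambda p(x) f_i(x)/q_i(x)$, where $f_i(x)$ is the fraction of the topic-$x$ load routed to expert $i$ and $\sum_i f_i(x)=1$, the budget constraints become $\lambda\sum_x p(x)f_i(x)/q_i(x)\leq 1$, so the largest admissible load satisfies
\begin{eqnarray}
\frac{1}{\lambda^*} & = & \min_{f_i(x)\geq 0,\; \sum_i f_i(x)=1}\; \max_i\; \sum_x \frac{p(x)}{q_i(x)}\, f_i(x). \nonumber
\end{eqnarray}

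Next I would reduce this min--max to the stated formula. Replacing $\max_i(\cdot)$ by $\max_{\alpha\geq 0,\,\sum_i\alpha_i=1}\sum_i\alpha_i(\cdot)$ makes the objective $\sum_x\sum_i \alpha_i\,(p(x)/q_i(x))\,f_i(x)$ bilinear in $(f,\alpha)$ over the product of two compact convex simplices, so von Neumann's minimax theorem (equivalently LP duality) permits exchanging the order of optimization. For fixed $\alpha$ the inner minimization over $f$ separates across topics, and each topic's contribution is minimized by routing its entire load to the expert attaining $\min_i \alpha_i p(x)/q_i(x)$. This yields $1/\lambda^*=\max_{\alpha}\sum_x\min_i\left(\alpha_i\, p(x)/q_i(x)\right)$, which is exactly (\ref{eqn:multiplecoordinating}); specializing to $n=1$ recovers Lemma \ref{thm:single}.

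Finally, for achievability I would fix $\lambda<\lambda^*$ and take the optimal routing $f^*$ from the reduction above. Because $\lambda<\lambda^*$, every expert's budget is strictly slack, so the induced allocation $\beta_i^*(x)=\lambda p(x)f_i^*(x)/q_i(x)$ can be inflated by a common factor $1+\delta$ while still satisfying $\sum_x \beta_i^*(x)\leq 1$, guaranteeing a service rate $(1+\delta)\lambda p(x)$ strictly above the arrival rate of each topic. The offline scheduler precomputes these fractions and, in each slot, assigns experts to topics so that the empirical assignment frequencies converge to the $\beta_i^*(x)$ (e.g. via a randomized or periodic schedule), serving an actual request whenever the chosen queue is nonempty. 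Stability then follows from a Lyapunov drift argument with $V(t)=\sum_x Q_x(t)^2$: when the backlogs are large the scheduled service is fully utilized, so the one-step drift of $V$ is at most $C-\delta\lambda\sum_x p(x)Q_x(t)$ for a constant $C$, which is negative outside a bounded set, giving (\ref{eqn:stability}).

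I expect the achievability step, rather than the duality, to be the main obstacle. The per-slot assignments are integral (one request per expert) and a realized service opportunity is wasted whenever the scheduled queue is momentarily empty, so the clean rate $\sum_i \beta_i^*(x)q_i(x)$ is attained only when $Q_x$ is large. Making the drift computation rigorous therefore requires controlling this idling/boundary effect together with the Bernoulli (geometric) nature of the service completions, and it is precisely the strict slack $\lambda<\lambda^*$, captured by the factor $1+\delta$, that is used to absorb the resulting error terms.
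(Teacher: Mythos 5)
Your proposal is correct, and its first half is essentially the paper's own argument in different clothing: your characterization $1/\lambda^* = \min_{f}\max_i \sum_x \frac{p(x)}{q_i(x)}f_i(x)$ is precisely the dual LP (\ref{eqn:dualmultiple}) (your $f_i(x)$ is the paper's $s_{i,x}$), and your von Neumann minimax exchange is the same strong-duality fact that the paper establishes by a Lagrangian computation in Lemma \ref{thm:dualmultiple}. Where you genuinely diverge is achievability. The paper does not time-share experts across merged topic queues; its offline scheduler instead splits each arriving request of topic $x$ randomly to expert $i$ with probability $s_{i,x}$, placing it in a per-expert queue $Q_{x,i}$, and each expert is work-conserving on her own queues only. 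The $n$-expert system then decomposes exactly into $n$ independent single-expert systems with arrival rates $\lambda p(x)s_{i,x}$, so stability follows by invoking Lemma \ref{thm:single} verbatim: the dual constraint (\ref{eqn:dualmultiplefirst}) plus strong duality gives $\lambda_i^* \geq (\mu^*)^{-1} = \lambda^* > \lambda$ for every $i$, and no new drift computation is needed. Your scheme --- merged queues, experts assigned to topics at rates $\beta_i^*(x)$, quadratic Lyapunov function $\sum_x Q_x(t)^2$ --- is the standard capacity-region argument from the scheduling literature and does work, but, as you yourself flag, it obliges you to control the idling/boundary terms (service wasted when a scheduled queue holds fewer requests than the experts sent to it). That control is routine (the waste contribution $\sum_x Q_x(t)\, n\, \indicator{Q_x(t)<n}$ is bounded by $n^2|\xspace|$, and the slack factor $1+\delta$ absorbs it), but it is exactly the work that the paper's randomized-routing decomposition is designed to avoid. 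The trade-off: your route stays at the level of the merged system and generalizes more readily (e.g., to settings where a request cannot be tagged to a particular expert upon arrival), while the paper's route is shorter and cleaner because it recycles the single-expert lemma as a black box.
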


  The offline scheduler is assumed to know $p(x),q_i(x)$. It maintains separate topic queues $Q_{x,i}(t)$ for each expert $i$.
Before considering requests, it first calculates the solution to the convex dual problem \cite{boyd} of the maximization problem over $\alpha_i$ stated in (\ref{eqn:multiplecoordinating}).
(For brevity, we will simply call this maximization problem as the problem (\ref{eqn:multiplecoordinating}).) The dual problem is
the Linear program below (see Lemma \ref{thm:dualmultiple}).
\begin{eqnarray}
\min_{\mu,  \ s_{i,x}}  && \hspace{-2ex} \mu  \quad \mbox{s.t.} \label{eqn:dualmultiple} \\
&& \sum_x \frac{p(x)}{q_i(x)}s_{i,x} \leq \mu, \ \forall i \label{eqn:dualmultiplefirst} \\
&&  \sum_i s_{i,x}=1, \ \forall x, \quad s_{i,x} \geq 0, \ \forall i,x \label{eqn:dualmultiplesecond}
\end{eqnarray}
Using these pre-computed $s_{i,x}$ (which we note is a p.m.f. over $i$ for each $x$), for each arriving request $x$, the scheduler selects an expert $i$ randomly and independently  according to the p.m.f.   $s_{i,x}$, 
and then inserts that request into the topic queue $Q_{x,i}$ of expert $i$. In each time slot, the scheduler also assigns a request randomly to expert $i$ from among the requests queued up at that expert's queues $Q_{x,i}$. Expert $i$ is kept idle
if and only if her own queues are all empty. Thus, the expert is work conserving with respect to her own queues.


As opposed to single expert scheduling, in this case, any one expert mismatched to the request p.m.f. $p(x)$
may not be catastrophic. In fact, the following case shows that a {\em diversity} of experts may be preferable. Suppose there
 are $n$ experts, with each expert $i$ having expertise $R_i \doteq \sum_x q_i(x)=1$. Consider
a toy case where $|\xspace|=n$ and $p(x)=\frac{1}{n}, \forall x$. If the experts are identical, i.e., $q_i(x)=q(x), \forall x$, then the capacity in 
\eqref{eqn:multiplecoordinating} is maximized for $q(x)=\frac{1}{n},  \forall x$ and it is $\lambda^*=1$. Instead, suppose we
have diverse experts $q_i(x)=\indicator{x=x_i}$ (where $\indicator{A} \in \{0,1\}$ is the indicator function of statement $A$), each of which
also has expertise $R_i=1$ as in the case of identical experts. Then, the capacity
 in \eqref{eqn:multiplecoordinating} is increased to $\lambda^*=n$, showing the benefit of diversity.

\section{Conclusions}
This paper set up a theoretical framework to analyze the dynamic process by which requests for information arrive in a social network,
so that either a single expert or a collection of experts can search for the needed information. Preliminary results on queuing and scheduling
analysis in this framework were presented. Future work will look at online and distributed schedulers for the scenarios analyzed in this paper.


 
\appendix

\section{Proofs}

We will use Lyapunov analysis and invoke the well-known Foster-Lyapunov theorem \cite{bremaud}, which we state below for completeness.
\begin{theorem}[Foster-Lyapunov theorem]
\label{thm:foster}
Suppose a Markov chain $Q(t)$ in a countable state space $E$ is irreducible and suppose there exists a function $L:E \rightarrow 	\mathbb{R}$ bounded below as $L \geq 0$. Suppose also
that there is a finite set $F$ and some $\delta>0$ such that,
\begin{eqnarray}
E[L(Q(t+1))|Q(t)] & <& \infty, \ \  \forall i \in F, \\ 
E[L(Q(t+1))|Q(t)] & <& L(Q(t)) - \delta, \ \ \forall i \notin F.
\end{eqnarray}
Then the Markov chain is positive recurrent. 
\end{theorem}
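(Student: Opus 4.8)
The plan is to obtain positive recurrence from a bound on the mean return time to the finite set $F$, and then to upgrade that bound via a standard renewal argument. Fix an arbitrary start $Q(0)=j\in F$ and let $\tau=\inf\{t\ge1:Q(t)\in F\}$ be the first return time to $F$. The central object is the stopped, drift-compensated process
\[
V_t = L(Q(t\wedge\tau)) + \delta\,(t\wedge\tau), \qquad t\ge 1,
\]
which is nonnegative because $L\ge0$ and $\delta>0$. First I would check that $(V_t)_{t\ge1}$ is a supermartingale for the natural filtration $(\mathcal F_t)$. On the event $\{t\ge\tau\}$ the process is frozen, so $V_{t+1}=V_t$; on $\{t<\tau\}$ we have $Q(t)\notin F$, so the drift hypothesis gives $E[L(Q(t+1))\mid Q(t)]\le L(Q(t))-\delta$, and the $-\delta$ exactly cancels the increment of $\delta\,(t\wedge\tau)$, yielding $E[V_{t+1}\mid\mathcal F_t]\le V_t$. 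Combining the two cases gives the supermartingale property.

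Next I would extract the hitting-time bound. The supermartingale property gives $\delta\,E[t\wedge\tau]\le E[V_t]\le E[V_1]$ for every $t$. Since $\tau\ge1$ we have $V_1=L(Q(1))+\delta$, and here the first hypothesis is essential: it guarantees $E[L(Q(1))\mid Q(0)=j]<\infty$, a genuine requirement because $Q(1)$ ranges over the countable space $E$. Hence $E[V_1]<\infty$. Letting $t\to\infty$ and applying monotone convergence to $t\wedge\tau\uparrow\tau$ yields $\delta\,E_j[\tau]\le E[V_1]<\infty$, so $E_j[\tau]<\infty$ for every $j\in F$; in particular $\tau<\infty$ almost surely, and the chain returns to $F$ infinitely often.

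Finally I would promote the finite mean return time to $F$ into positive recurrence of the whole chain. Consider the \emph{watched chain} $\tilde Q_k=Q(\theta_k)$, where $\theta_0=0$ and $\theta_{k+1}=\inf\{t>\theta_k:Q(t)\in F\}$; by the strong Markov property this is a Markov chain on the finite set $F$, and it inherits irreducibility from $Q(t)$. A finite irreducible chain is positive recurrent with a stationary law $\tilde\pi$, so for a fixed $i^*\in F$ the number $N$ of watched steps until the return to $i^*$ satisfies $E[N]=1/\tilde\pi(i^*)<\infty$. With $c=\max_{j\in F}E_j[\tau]<\infty$ (finite because $F$ is finite), each excursion between consecutive visits to $F$ has conditional expected length at most $c$, so a Wald/optional-stopping estimate over the $N$ excursions gives $E_{i^*}[\text{return time to }i^*]\le c\,E[N]<\infty$. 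Thus $i^*$ is positive recurrent, and by irreducibility so is the entire chain.

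The supermartingale identity and the monotone-convergence limit are routine once the drift-compensated process is written down, the only subtlety there being integrability, which is exactly what the first hypothesis supplies. The main obstacle is the last step: converting a finite mean hitting time of the finite set $F$ into positive recurrence of the full, countably infinite chain. Finite expected return time to a set does not by itself bound the return time to a single state, so the regenerative passage through the watched chain on $F$, together with the uniform excursion bound $c$ and the renewal/Wald estimate, is the part that must be handled with care to close the argument.
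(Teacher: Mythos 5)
The paper never proves this theorem: it is quoted ``for completeness'' from the cited reference \cite{bremaud}, so there is no in-paper argument to compare against, and your proposal should be judged on its own. On that basis it is correct, and it is essentially the standard textbook proof. The drift-compensated supermartingale $V_t = L(Q(t\wedge\tau)) + \delta\,(t\wedge\tau)$ is set up properly: on $\{t<\tau\}$ the state lies outside $F$ so the drift hypothesis cancels the $\delta$ increment, integrability of $V_1$ is exactly what the first hypothesis supplies when $Q(0)=j\in F$, and monotone convergence gives $\delta\,E_j[\tau]\le E[L(Q(1))]+\delta<\infty$ for every $j\in F$. Your final step, which you rightly flag as the delicate one, also goes through: the watched (trace) chain on the finite set $F$ is well defined because $\tau<\infty$ a.s.\ from each $j\in F$, it inherits irreducibility (any positive-probability path between states of $F$ in the original chain induces a positive-probability transition sequence of the trace chain), Kac's formula on the finite chain gives $E[N]=1/\tilde\pi(i^*)<\infty$, and the generalized Wald bound $E_{i^*}[\mbox{return time}]\le c\,E[N]$ is legitimate since $\{N\ge k\}$ is measurable with respect to the history up to $\theta_{k-1}$ and $E[\theta_k-\theta_{k-1}\mid \mathcal{F}_{\theta_{k-1}}]\le c$ on that event; finite mean return time to a single state plus irreducibility then yields positive recurrence of the whole chain. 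Two minor remarks: the statement as printed in the paper has notational slips (the hypotheses should be conditioned on $Q(t)=i$, and the drift condition is conventionally written with $\le$), which you correctly read through; and it would be worth one explicit line in your write-up that only $E_j[\tau]$ for $j\in F$ is needed, since positive recurrence is a class property under irreducibility --- a point you use implicitly at the end.
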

With a slight abuse of notation, the Lyapunov function is often written as $L(t)$.

\begin{proofnumbered}
{\em [Lemma \ref{thm:single}]:}
Let ${\mathbf Q}(t) \doteq 
[Q_x(t)]$ be the vector of topic queue lengths. Assume that $q(x)>0, \forall x$, since otherwise $\lambda^*=0$ and the Lemma is trivially proved .
To show that any load $\lambda < \lambda^*$ is achievable using any work conserving scheduler, consider the Lyapunov function $L(t) = \sum_x \frac{1}{q(x)}Q_x(t)$ for the irreducible Markov chain ${\mathbf Q}(t)$. 
Then, $\Delta L(t)\doteq L(t+1)-L(t)=
\sum_x \frac{1}{q(x)}(a_x(t)-d_x(t))$. So, $E[\Delta L(t) | {\mathbf Q}(t)] = \sum_x \frac{1}{q(x)}E[a_x(t)-d_x(t)| {\mathbf Q}(t)]  = 
\sum_x \frac{1}{q(x)}(\lambda p(x) - q(x)\sigma_x(t)) = \lambda  \sum_x \frac{p(x)}{q(x)} - \sum_x  \sigma_x(t) $, where $\sigma_{x}(t)=1$ if the scheduler assigns a request from topic $x$ to the expert, else $0$. 
This is because, if an expert works on request $x$, it has a probability  $q(x)$ of successfully answering it in that slot.
Let $B = \{{\mathbf Q}(t):  {\mathbf Q}(t)={\mathbf 0} \}$.
For any work conserving scheduler, $\sum_x  \sigma_x(t) =1$ if ${\mathbf Q}(t) \notin B$.  So, for the case ${\mathbf Q}(t) \notin B$,
 $E[L(t+1) | {\mathbf Q}(t)] = L(t) + \lambda  \sum_x \frac{p(x)}{q(x)} - 1 = L(t) -\delta$, where $\delta \doteq 1-\lambda \sum_x \frac{p(x)}{q(x)} >0$ since $\lambda < \lambda^*$. 
Further, for  the
case ${\mathbf Q}(t) \in B$, 
$E[L(t+1) | {\mathbf Q}(t)] \leq \sum_x  \frac{1}{q(x)} E[a_x(t) | {\mathbf Q}(t)] = \sum_x \frac{\lambda p(x)}{q(x)} = c < \infty$ since we assumed $q(x)>0, \forall x$. 
\par
Considering both cases, by Foster-Lyapunov theorem, the irreducible Markov chain ${\mathbf Q}(t)$ is  positive recurrent, which proves stability. 
Alternatively, stability-in-the-mean can be directly obtained by telescoping the $E[\Delta L(t) | {\mathbf Q}(t)]$ terms. 

\noindent $E[L(t)] \leq E[L(0)] + \sum_{\tau=1}^t \left(-\delta 1({\mathbf Q}(\tau) \ne {\mathbf 0}) + c 1({\mathbf Q}(\tau) = {\mathbf 0}) \right) \leq \max(E[L(0)],c) $.
So,

\noindent
$\limsup_{T \rightarrow \infty} \frac{1}{T} \sum_{t=1}^{T} \sum_x E[Q_{x}(t)] \leq (\max_x q(x)) \limsup_{T \rightarrow \infty} \frac{1}{T} \sum_{t=1}^{T} E[L(t)]$

\noindent $ \leq  \max(E[L(0)],c) (\max_x q(x)) < \infty$. Thus, the chosen $\lambda$ also achieves queue stability-in-the-mean.
\par
For the converse, if $\lambda > \lambda^*$,  $E[L(t+1) | {\mathbf Q}(t)] = L(t)  + E[\sum_x \frac{1}{q(x)}(a_x(t) - d_x(t)) | {\mathbf Q}(t)] = L(t)  + 
\sum_x (\lambda \frac{p(x)}{q(x)} - \sigma_{x}(t)) \geq L(t)  + \lambda \sum_x \frac{p(x)}{q(x)} - 1 = L(t)-\delta$, since the expert can only work on one request in each time slot.
However, since $\lambda > \lambda^*$, we now have $\delta<0$.
Telescoping this result, we get $E[L(t)] \geq  E[L(0)] - t\delta$. Letting $q_{min} = \min_x q(x) >0$, we have
$\frac{1}{T} \sum_{t=1}^{T} \sum_x E[Q_x(t)] \geq q_{min} \frac{1}{T} \sum_{t=1}^{T} \sum_x \frac{1}{q(x)} E[Q_x(t)] = q_{min} \frac{1}{T} \sum_{t=1}^{T} E[L(t)] \geq
q_{min} E[L(0)] - \frac{1}{2} q_{min}\delta (T+1) \rightarrow \infty$ as $T \rightarrow \infty$. Thus, the queues are not stable-in-the-mean.
\end{proofnumbered} 

\vspace{1em} 

\begin{proofnumbered}
{\em [Lemma \ref{thm:singleloss}]:}
Note that the stated optimization problem can be re-written as 
\begin{eqnarray}
&& \max_{\mu(x)} \ \lambda \quad \mbox{s.t.} \label{eqn:singlelossmax} \\
&& \lambda \sum_x \mu(x) \frac{p(x)}{q(x)} \leq  1, \label{eqn:firstinequality} \\
&& \lambda \sum_x (1-\mu(x)) p(x) \leq \varepsilon,   \label{eqn:secondinequality} \\
&&  0 \leq \mu(x) \leq 1, \forall x.
\end{eqnarray}
This is because $\lambda$ is maximized when both inequalities (\ref{eqn:firstinequality}),(\ref{eqn:secondinequality}) are equalities. Probability $\mu(x) $ can be shifted from one inequality to the other until both are equalities. Thus, in the optimal
solution, $\sum_x \mu(x) \frac{p(x)}{q(x)}  = \sum_x (1-\mu(x)) \frac{p(x)}{\varepsilon}$. This is equality (\ref{eqn:caplossequality}) stated in the Lemma.
\par
The offline
scheduler, which drops requests randomly, is equivalent to reducing the expected arrival rate at the queue of $x$ to $\lambda p(x)\mu(x)$. So, by Lemma \ref{thm:single}, 
$\lambda = \left(\sum_x \frac{p(x)\mu(x)}{q(x)}\right)^{-1}$ is indeed achievable with stable queues. For the losses, $E[e_{x}(t)] = E[e_{x}(t) a_x(t)] =
\lambda p(x)  (1-\mu(x))$. So, $\limsup_{T \rightarrow \infty} \frac{1}{T} \sum_{t=1}^{T} \sum_x E[e_{x}(t)] = \limsup_{T \rightarrow \infty} \frac{1}{T} \sum_{t=1}^{T} \sum_x \lambda p(x)  (1-\mu(x))
\leq \varepsilon$ due to (\ref{eqn:secondinequality}). Thus, loss is within the acceptable bound.

\end{proofnumbered}

\vspace{1em} 

\begin{proofnumbered}
{\em [Corollary \ref{thm:singleerror}]:}
Here, $\lambda^* = \left(\sum_x \frac{p(x)}{\hat{q}(x)}\right)^{-1}$, where $\hat{q}(x) = \frac{1}{\hat{T}(x)}$, since the erroneous
$\hat{T}(x)$
is used to calculate capacity. Since $\hat{q}(x) \leq \frac{1}{\gamma} q(x)$, $\lambda^* \leq \left(\sum_x \gamma \frac{p(x)}{q(x)}\right)^{-1}$.
Thus, if the load satisfies $\lambda < \gamma \lambda^*$, we also get $\lambda < \left(\sum_x \frac{p(x)}{q(x)}\right)^{-1}$, where the right hand side is the true capacity of the system. Therefore,
by Lemma 1 (scheduling without errors), such $\lambda$ is achievable with stable queues.

\end{proofnumbered} 

\vspace{1em}

\begin{lemma}
\label{thm:dualmultiple}
The problem (\ref{eqn:dualmultiple}) is the convex dual of problem (\ref{eqn:multiplecoordinating}).
\end{lemma}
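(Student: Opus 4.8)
The plan is to recast the concave maximization (\ref{eqn:multiplecoordinating}) as an explicit linear program and then form its Lagrangian dual, which I expect to coincide exactly with (\ref{eqn:dualmultiple}); strong duality then closes the argument. First I would linearize the inner minima. For each topic $x$ introduce a free epigraph variable $t_x$ standing for $\min_i \alpha_i p(x)/q_i(x)$. Because the objective $\sum_x t_x$ is increasing in each $t_x$, at optimum every $t_x$ is driven up to this minimum, so (\ref{eqn:multiplecoordinating}) is equivalent to the linear program
\[
\max_{(\alpha_i),(t_x)} \ \sum_x t_x \quad \text{s.t.} \quad t_x \le \alpha_i \frac{p(x)}{q_i(x)} \ \ \forall i,x, \quad \sum_i \alpha_i = 1, \quad \alpha_i \ge 0 ,
\]
where I assume $q_i(x)>0$ as in Lemma~\ref{thm:single} (topics with $q_i(x)=0$ for some $i$ are handled by restricting that constraint to experts with positive success probability).

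Next I would attach multipliers $s_{i,x}\ge 0$ to the inequalities $t_x - \alpha_i p(x)/q_i(x)\le 0$ and a free multiplier $\mu$ to the normalization $\sum_i \alpha_i = 1$, forming the Lagrangian
\[
\mathcal{L} = \mu + \sum_x t_x\Bigl(1 - \sum_i s_{i,x}\Bigr) + \sum_i \alpha_i\Bigl(\sum_x s_{i,x}\frac{p(x)}{q_i(x)} - \mu\Bigr).
\]
The dual function is the supremum of $\mathcal{L}$ over the primal variables. Since $t_x$ is free, the supremum is finite only if its coefficient vanishes, forcing $\sum_i s_{i,x}=1$ for every $x$; since $\alpha_i\ge 0$, finiteness forces each coefficient nonpositive, i.e.\ $\sum_x s_{i,x}\,p(x)/q_i(x)\le \mu$ for every $i$. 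On this feasible set the supremum equals the constant $\mu$, so the dual problem is precisely to minimize $\mu$ subject to these two families of constraints together with $s_{i,x}\ge 0$, which is exactly (\ref{eqn:dualmultiple}).

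Finally I would invoke strong duality. The primal feasible set is the probability simplex in $(\alpha_i)$ together with the $t_x$ it determines; it is nonempty and compact, so the continuous objective attains a finite maximum. The dual is feasible (take any p.m.f.\ $s_{\cdot,x}$ over $i$ and $\mu$ large) and bounded below by $0$, so by linear-programming strong duality there is no duality gap and the two optimal values agree. The step I expect to be the main obstacle is bookkeeping the sign conventions so that maximizing $\mathcal{L}$ over the \emph{unbounded} variables $t_x$ and $\alpha_i$ yields exactly the normalization $\sum_i s_{i,x}=1$ and the bound $\sum_x s_{i,x}\,p(x)/q_i(x)\le\mu$ rather than their reverses, and confirming that the reformulation and the LP feasibility conditions genuinely deliver \emph{strong} (not merely weak) duality, so that (\ref{eqn:dualmultiple}) is the dual with equal value and not just a valid bound.
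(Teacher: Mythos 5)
Your proof is correct and follows essentially the same route as the paper's: both linearize the inner minima with epigraph variables (your $t_x$, the paper's $\beta(x)$) and derive (\ref{eqn:dualmultiple}) by Lagrangian duality on the resulting LP. The only differences are cosmetic --- the paper keeps the simplex constraint on $(\alpha_i)$ explicit, obtains the dual as $\min_{s}\max_i \sum_x \frac{p(x)}{q_i(x)}s_{i,x}$, and only then introduces $\mu$ as an epigraph variable, whereas you dualize the normalization $\sum_i \alpha_i = 1$ so that $\mu$ appears directly as its multiplier; your closing strong-duality verification is a useful bonus that the lemma statement itself does not demand (the paper invokes strong duality without proof later, in establishing Lemma \ref{thm:multiple}).
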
 
\begin{proof}
The problem (\ref{eqn:multiplecoordinating}) can be written as
\begin{eqnarray}
\max_{\alpha_i, \ \beta(x)} \sum_x \beta(x), && \quad \mbox{s.t.} \\
 \beta(x) &\leq & \alpha_i \frac{p(x)}{q_i(x)}, \ \forall i,x,  \label{eqn:dualineq}  \\
 \sum_i \alpha_i&= &1, \quad \alpha_i \geq 0 \nonumber
\end{eqnarray}
With $s_{i,x} \geq 0$ being the dual variables for inequalities (\ref{eqn:dualineq}), the Lagrangian is $J = \sum_x \beta(x) - \sum_i \sum_x s_{i,x}(\beta(x) - \alpha_i \frac{p(x)}{q_i(x)}) =
\sum_x \beta(x)(1-\sum_i s_{i,x}) + \sum_i \alpha_i \sum_x  \frac{p(x)}{q_i(x)}s_{i,x}$.
Maximizing the Lagrangian over $\beta(x)$ shows that it is finite only when the condition $\sum_i s_{i,x}=1, \forall x$ is imposed.  Then, maximizing the Lagrangian over the p.m.f. $\alpha_i$
gives the dual function $\max_{\alpha_i, \beta(x)} J = 0 + \max_i \sum_x  \frac{p(x)}{q_i(x)}s_{i,x}$. Thus, the convex dual problem is
\begin{eqnarray}
\min_{s_{i,x}} \ \max_i && \hspace{-2em} \sum_x  \frac{p(x)}{q_i(x)}s_{i,x},  \quad \mbox{s.t.} \label{eqn:dualmin2} \\
 \sum_i s_{i,x}&= & 1, \ \forall x, \quad s_{i,x} \geq 0, \ \forall i,x 
\end{eqnarray}
The  minimization in (\ref{eqn:dualmin2}) can be re-written as $\min_{\mu, s_{i,x}} \ \mu$, where $\mu \geq \sum_x  \frac{p(x)}{q_i(x)}s_{i,x}, \forall i$. This gives the dual problem specified in (\ref{eqn:dualmultiple}).

\end{proof}

\vspace{1em}

\begin{proofnumbered}
{\em [Lemma \ref{thm:multiple}]:}
Let $s_{i,x}, \mu^*$ be the optimal solution of the dual problem (\ref{eqn:dualmultiple}). Recollect that the offline scheduler uses this optimal $s_{i,x}$ to assign requests to experts' individual queues.
For expert $i$, the arrival of request $a_{x,i}(t)$ into its queue $Q_{x,i}$ is independent of arrival of other requests to its own queues or to other experts' queues, and has a rate of $\lambda p(x)s_{i,x}$ with load $\lambda$. Since the scheduling of expert $i$ only
considers its own queues, its schedule is independent of schedules of other experts. So we can  analyze the queue stability of each expert $i$ separately. By Lemma \ref{thm:single},
expert $i$'s capacity is $\lambda_i^* = \left(\sum_x \frac{p(x)s_{i,x}}{q_i(x)}\right)^{-1} \geq (\mu^*)^{-1}$ by (\ref{eqn:dualmultiplefirst}). By strong duality, the solutions of (\ref{eqn:multiplecoordinating})
and (\ref{eqn:dualmultiple}) are the same, i.e., $\mu^* = (\lambda^*)^{-1}$. So, $\lambda_i^* \geq \lambda^*$ and also $ \lambda^* > \lambda$ by choice of the load. Thus, the load $\lambda$ seen by expert $i$ is indeed
below its capacity $\lambda_i^*$, and so, Lemma \ref{thm:single} guarantees its queue stability.

\end{proofnumbered}

\end{document}